\newtheorem{theorem}{Theorem}[section]
\newtheorem{corollary}[theorem]{Corollary}
\newtheorem{lemma}[theorem]{Lemma}
\newcommand{\suppress}[1]{}
\theoremstyle{definition}
\newtheorem{definition}[theorem]{Definition}
\newtheorem{remark}[theorem]{Remark}
\newenvironment{fminipage}%
  {\begin{Sbox}\begin{minipage}}%
  {\end{minipage}\end{Sbox}\fbox{\TheSbox}}
\newenvironment{algbox}[0]{\vskip 0.2in
\noindent 
\begin{fminipage}{6.3in}
}{
\end{fminipage}
\vskip 0.2in
}
\def\norm#1{\left\| #1 \right\|}
\def\aa{\boldsymbol{\mathit{a}}}
\newcommand\rr{\boldsymbol{\mathit{r}}}
\newcommand\yy{\boldsymbol{\mathit{y}}}
\newcommand\xx{\boldsymbol{\mathit{x}}}
\newcommand\XX{\boldsymbol{\mathit{X}}}
\begin{document}
	
	\title{A Performance-Based Scheme for \\Pricing Resources in the Cloud}

	\author{
	Kamal Jain \\
	Faira \\ 
	\texttt{kamaljain@gmail.com }
	\and
	Tung Mai  \\
	Georgia Tech\\
	\texttt{maitung89@gatech.edu}
	\and 
	Vijay V. Vazirani \\
	Georgia Tech\\
	\texttt{vazirani@cc.gatech.edu}
	}
	
	\date{}
	\maketitle

\begin{abstract}	

With the rapid growth of the cloud computing marketplace, the issue of pricing resources in the cloud has been the subject of much study in recent years.
In this paper, we  identify and study a new issue: how to price resources in the cloud so that the customer's risk is minimized, while at the same time ensuring
that the provider accrues his fair share. We do this by correlating the revenue stream of the customer to the prices charged by the provider. We show that our
mechanism is incentive compatible in that it is in the best interest of the customer to provide his true revenue as a function of the resources rented.
We next add another restriction to the price function, i.e., that it be linear. This removes the distortion that creeps in when the customer has to pay more 
money for less resources. Our algorithms for both the schemes mentioned above are efficient.

\end{abstract}

	\pagenumbering{gobble}
	
	\newpage
	
	\setcounter{page}{0}
	\pagenumbering{arabic}

\section{Introduction}

The cloud computing marketplace is the fastest growing market on the Internet today \cite{Armbrust09abovethe,growth}. 
Indeed, with most large companies rapidly moving their computation
into the cloud and startups following suit, most projections predict that this market will dwarf all other Internet markets, including the multi-billion
dollar Adwords market of search engine companies \cite{growth}. Markets on the Internet form a sizable fraction of the economy. 
They are characterized not only by their huge size and easy scalability, but also by their innovativeness, e.g., markets such as the Adwords market
and auction markets of eBay and Yahoo! are based on very different economic principles than traditional markets. In keeping with these trends and their
massive success, it is quintessential to understand the idiosyncrasies of the cloud computing market and 
design mechanisms for its efficient operation. Indeed, in recent years many researchers have studied the issue of pricing resources in the cloud (see Section
\ref{sec.related}).

In this paper, we propose a performance-based pricing scheme for resources in the cloud.
Assume that Amazon is providing resources in the cloud and a small startup, say X, is one of its customers.
The revenue stream of X is neither steady nor predictable and hence its profits --- and losses --- fluctuate considerably over time. In the face of these
realities, an important consideration for it is to ensure that its losses do not mount up to the extent that it goes bankrupt. The question we address in this
paper is whether Amazon can adopt a pricing scheme that minimizes the risk of X going under. 
Our pricing scheme enables Amazon to trade away company X's risk while at the same time ensuring that its expected revenue is not hurt. 
Indeed, if company X survives as a result of lower risk, Amazon's expected revenue will only increase in the long run.
The fluctuations in Amazon's revenue may increase as a result of our pricing mechanism; however, since it is a very large company and deals with numerous 
customers at the same time, this will not be of much consequence to it. 
Our mechanism involves correlating the prices that Amazon charges 
to the revenue stream, i.e., performance, of company X. Although this idea and its details were conceived in the context of cloud computing, it can be easily be
seen to be quite general and applicable to many other situations in which customers rent resources whose amounts vary frequently.

The {\em Elastic Cloud Computing (EC2)} market of Amazon is the biggest provider of cloud computing resources today, with other big players being
Microsoft and IBM.
The EC2 market rents out a number of different types of resources -- virtual machines (VM) with different kinds of capabilities, e.g., compute optimized,
storage optimized, memory optimized and general purpose. We note that at present, Amazon and other providers use fairly straightforward mechanisms for
renting out these resources, e.g., EC2 rents out resources in one of three ways \cite{amazon}. The first is Pay-As-You-Go (PAYG) under which the user has full flexibility to 
use any resources at the time they are needed. The second is a Reserve market under which the user books resources in advance, and the third is the spot
market under which all resources not currently in use by customers of the first two categories are allocated via an auction -- Amazon announces rates of
renting, which change as demand and supply change, and customers who bid more than the rate get the resource but are evicted as soon as the rate
exceeds their bid (giving them a couple of minutes to save their data). The rates charged are decreasing across these three methods, with the ratio of the 
first and the third being as high as a factor of five. Clearly, as this market grows in size and complexity, better mechanisms
that are steeped in sound economic theory and the theory of algorithms will be called for. 

Currently, the market of cloud computing is dominated by a few big players and hence oligopolistic pricing applies, i.e., prices are higher
than competitive prices. However, as more companies rent resources in the cloud, this will become a commodity market with very low profit
margins. The way out of this for companies is to offer value-added services, smart pricing being one of them.

The power of pricing mechanisms is well explored in economics, and it is well understood for the case of equilibrium pricing, which are prices under which
there is parity between demand and supply \cite{Mas-Colell95microeconomictheory,Kreps90acourse}. 
It is known that this method allocates resources efficiently since prices send strong
signals about what is wanted and what is not, and it prevents artificial scarcity of goods while at the same time ensuring that goods that are 
truly scarce are conserved. Hence it is beneficial to both consumers and producers. An equilibrium-based
mechanism for replacing the spot market for cloud computing resources is proposed in \cite{D+}.

\subsection{Related work}
\label{sec.related}

As mentioned above, many researchers have studied the issue of pricing resources in the cloud, e.g., see \cite{pricing,A+,Niu12pricingcloud,Anselmi13theeconomics,Armbrust09abovethe,Xu13astudy,Jain+,Yehuda+,Ceppi+,Ballani+,Blocq+}.
We describe several of these issues below. We note however that the issue identified and studied in this paper is very different from these.
 
The three tiered market of EC2 described above is sometimes viewed as the use of price discrimination, a well-studied mechanism in economics \cite{Mas-Colell95microeconomictheory,Kreps90acourse}. 
The idea here is that by a small differentiation in the product sold, one can distinguish between customers who can pay a lot from those who cannot.
A very successful use of this concept arises in airline ticket sales, where by imposing conditions like Saturday overnight stay, the airlines can
distinguish between business travelers and casual travelers and hence charge them different fares. Of course, in the three tiers described above for EC2, the nature of
services offered is quite different and one can argue that different rates should apply. However, a ratio of five-to-one on the price charged smacks of
the use of price discrimination.

Another issue explored in pricing is whether cloud resources should be rented on a metered basis or on a flat fee basis. In the past, 
very prominent industries went from one extreme to the other as the industry grew and the cost of basic resources dropped, a case in point being
telephone charges \cite{odlyzko}, which started in a strict metered manner, with a small fee for connection, to the current flat charges. In the case of cloud resources,
metered charges make the most sense at present; however, as computing, storage and bandwidth costs drop, it is conceivable that pricing will take a hybrid form
of some kind.

At present, three very distinct resources are rented in the cloud: computing power, storage and bandwidth. 
An issue being studied is whether these three resources should be rented separately or in suitable bundles.

\subsection{Our results and techniques}

As stated above, we provide a pricing scheme which enables Amazon to trade away the risk experienced by company X without decreasing its
own expected revenue. We furthermore show that our scheme is incentive compatible. 

The scheme is as follows. Company X declares to Amazon the number, $m$, of types of resources it
may rent and the set of possible resources which it may rent. For each combination of resources it may rent on a day, it also provides Amazon with the revenue it will accrue
on that day (we show that it is in company X's best interest to reveal this information correctly). 
Amazon and company X jointly agree on the probability distribution from which its requests arrive, by observing historical data. Hence, Amazon knows the expected daily
cost X should be charged for renting the resources. The question is what is the most effective way for Amazon to retrieve this cost. 

We give a scheme whereby Amazon is able to retrieve this cost in such a way that the daily variance in the profit of X, i.e., the difference of revenue and price paid,
is minimized. Indeed, our scheme simultaneously minimizes not only the second moment of deviation from mean profit but also the $\rho$-th 
moment, for any $\rho > 1$. Moreover we show that such a function is unique it also maximizes the minimum profit of X.
We note that there are numerous definitions of risk, without there being a single standard one. Our scheme minimizes risk for all definitions of risk
referred to in the previous claim. It also ensures that prices and the profit are always non-negative.
Our algorithm is linear time, modulo log factors. We provide an intuitive description of our algorithm using the idea of 
filling water in a trough with a warped bottom. 

We next add another restriction to the price function, i.e., that it be linear. This removes the distortion that creeps in when the customer has to pay more 
money for less resources. Once again we ensure that prices are non-negative. Our algorithm involves lifting the points (revenue as a function of resources rented)
into a higher dimensional space so that the function being handled is homogeneous and hence each point can be given an appropriate weight. 
The algorithm then makes just one call to a non-negative least squares solver, for which highly optimized implementations are available, on the set of preprocessed points.

\section{An insightful example}
\label{sec.example}

In this section, we give a simple example that captures the essence of our idea. 
Consider a business model involving two agents, in which agent A has  a fair coin and provides a ``coin toss service'' for agent B.
Specifically, agent B pays \$1 to agent A  for a coin toss and earns \$3 from an outside source if it comes up head and \$0 if it comes up tails.
Clearly, the business is profitable for agent B since he makes 50 cents per toss in expectation. 
However, there is a risk that he might lose a considerable amount of money if he gets a string of tails. 
Even worse, if his budget is small, he might go bankrupt and cannot keep the business running. 
Such an outcome is also undesirable for A since he loses a customer. 

To deal with this issue, A comes up with an alternative pricing scheme that is favorable for both agents. 
The proposed scheme is that instead of charging \$1 for each toss, he will charge \$2 for a head and nothing for a tail.
As a consequence, B will gain \$1 if a head shows up, and lose nothing if a tail shows up. 
Although he still makes a profit of 50 cents per toss in expectation, the business is now risk-free for him in the sense that he never loses money. 
From agent A's point of view, the proposed scheme is also beneficial for him in the long run despite the fact that there is no guarantee of making \$1 per toss. The reason is that he will have B as his customer forever and still make \$1 per toss in expectation.

Note that in the above example, we assume that A can generate coin tosses at no cost. 
However, if there is a cost and the cost is insignificant compared to A's budget, it can be seen that the scheme is still favorable to him by a similar argument. 
Another remark is that if we insist that all prices must be non-negative and the variance on profit of B is minimized then the proposed pricing scheme is unique. 
Later on, we will show that such prices can be computed algorithmically and that they give an even stronger guarantee on the profit of B.

	\section{Model and definitions}
\label{sec.model}

We give a formal description of the model on which our results are based. 
The model involves two agents: a {\em provider} (called Amazon above) who sells resources and a {\em customer} (named X above) 
who uses resources to make profit from an outside source. 
The customer has a distribution on his demand which both agents agree upon. 
For example, they can obverse the history of usage of the customer over a period of time. 
Moreover, we assume that the customer has a revenue function, which is a function of resources consumed and must report it (truthfully or not) to the provider.
To analyze, we take on the role of the provider and propose a pricing scheme for the customer based on his reported revenue function.
We will show that our pricing scheme minimizes the deviation of the customer's profit. Therefore, the customer who is assumed to be rational and wants to 
minimize his risk, will report his revenue function truthfully. 

Let $m$ be the number of resources and
let $\rr = ( r_1, r_2, \ldots, r_m )$ be a demand vector of the customer on the resources. We assume that $\rr$ follows a discrete distribution with probability mass function $f(\rr): S \rightarrow R$ where $S = \{ \rr^{(1)}, \ldots , \rr^{(N)} \}$ is a discrete domain of size $N$. Let $q(\rr): S \rightarrow R$ be our original starting price function. In other words, $q(\rr)$ denotes the price that we are willing to charge for $\rr$. 
Finally, let $v(\rr): S \rightarrow R$ be the revenue function of the customer on $\rr$.

We are interested in price functions where the expected price is exactly equal to the expected starting price.

\begin{definition}
	A \emph{fair price function} is a price function $p(\rr)$ such that
	\[ \sum_{\rr \in S} p(\rr)f(\rr) =  \sum_{\rr \in S} q(\rr)f(\rr).\]
\end{definition}

Next, we are also interested in price functions that assume only non-negative values. 
\begin{definition}
	A \emph{non-negative price function} is a price function $p(\rr)$  such that $p(\rr) \geq 0 \, \forall \rr \in S$.
\end{definition}

Moreover, the target price function must give a guarantee on the customer's profit that it should not deviate too much from the expected value. 
Since the customer has revenue $v(\rr)$ and cost $p(\rr)$ on $\rr$, his net profit is $v(\rr) - p(\rr)$. For a fair price $p(\rr)$, the expected profit $\mu$ is given by 
\[ \sum_{\rr \in S} \left( v(\rr) - p(\rr) \right)f(\rr) = \sum_{\rr \in S} \left( v(\rr) - q(\rr) \right)f(\rr). \]

\begin{definition}
	A \emph{steady-profit price function} is a fair and non-negative price function that minimizes
	\[ \sum_{\rr \in S} \left( v(\rr) - p(\rr) - \mu \right)^\rho f(\rr)\]
	over all such functions for all $\rho >1 $.
\end{definition}

Note that it is not obvious that a steady-profit price function should exist. However, in the next section
we will show that such a function not only exists but can also be computed efficiently.

\section{A water-filling algorithm}
\label{sec.waterlevel}

In this section, we present an algorithm for computing a price with the following properties:
\begin{enumerate}
	\item \emph{Fairness:} The target price function is a fair price function, i.e., the customer has to pay the same amount compared to the starting price in expectation. 
	\item \emph{Risk-freeness:} The customer's profit is non-negative in the whole domain, i.e., he never loses money.
	\item \emph{Non-negativity:} The price is non-negative on the whole domain, i.e, we never pay the customer.
	\item \emph{Stability:} The price function is a steady profit price, i.e., the $\rho$-moment of the profit deviation from the mean value is minimized for any $\rho > 1$.
\end{enumerate}

The main algorithm, which we call $ \textsc{WaterlevelPricing}$, is given in Figure~\ref{alg:main}. At a high level, it can be viewed as raising prices such that the profit values are as equal as possible until the price function becomes a fair function. An intuitive illustration of the algorithm is flipping the revenue function up side down and start raising prices as if they are water flowing in the 
function's surface. 

\begin{figure}[ht]
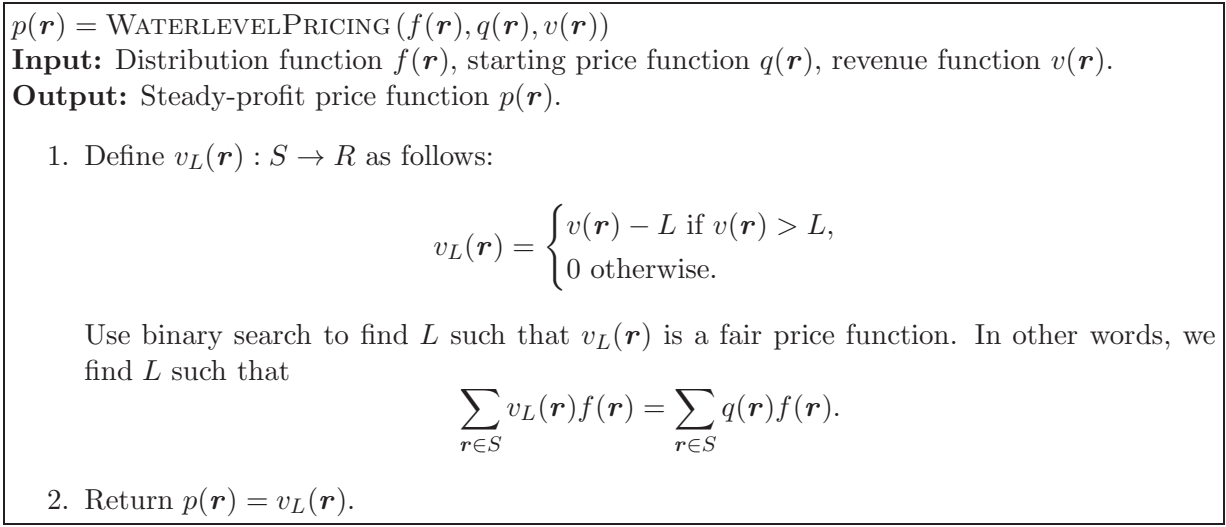

	
	\begin{algbox}
		$p(\rr) = \textsc{WaterlevelPricing}\left( f(\rr),q(\rr),v(\rr)\right)$
		
		\textbf{Input:} Distribution function $f(\rr)$, starting price function $q(\rr)$, revenue function $v(\rr)$. 
		
		\textbf{Output:} Steady-profit price function $p(\rr)$.
		\begin{enumerate}
			\item Define $v_L(\rr): S \rightarrow R$ as follows: 
			$$ 
			v_L(\rr) =  \begin{cases} v(\rr) - L \text{ if } v(\rr) > L ,\\ 0 \text{ otherwise}. \end{cases}
			$$
			Use binary search to find $L$ such that $v_L(\rr)$ is a fair price function. In other words, we find $L$ such that 
			$$
			\sum_{\rr \in S} v_L(\rr)f(\rr)  = \sum_{\rr \in S} q(\rr)f(\rr).
			$$
			\item Return $p(\rr) = v_L(\rr)$.
		\end{enumerate}
	\end{algbox}
	
	\caption{Algorithm for Computing a Steady-profit Price Function.}
	
	\label{alg:main}
	
\end{figure}

We give the following lemma, which is needed for the proof of the main theorem. 

\begin{lemma}
\label{lem:helper}
Let $ a, b, \rho$ be positive real constants and $\rho >1$. Let $x_1$ and $x_2$ be two real variables such that $x_1 > x_2$. 
There exists $\Delta$ such that for all $\delta < \Delta$, if we decrease $x_1$ by $\delta$ and increase $x_2$ by $\frac{a \delta}{b}$, then the value of 
\[ \Phi =  a x_1^{\rho} + b x_2^{\rho}  \]
will decrease.
\end{lemma}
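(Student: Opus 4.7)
The plan is to view the perturbed value of $\Phi$ as a smooth function of the single parameter $\delta$, show that its derivative at $\delta=0$ is strictly negative, and then invoke continuity to produce the required $\Delta$. Concretely, define
\[ g(\delta) = a(x_1 - \delta)^\rho + b\left(x_2 + \tfrac{a\delta}{b}\right)^\rho, \]
so that $g(0)$ equals the original value of $\Phi$. It then suffices to exhibit a $\Delta > 0$ for which $g(\delta) < g(0)$ whenever $\delta \in (0, \Delta)$.

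Differentiating by the chain rule, the factor of $b$ in the second term cancels against the factor of $a/b$ coming from the inner derivative, leaving
\[ g'(\delta) = a\rho\Bigl[\bigl(x_2 + \tfrac{a\delta}{b}\bigr)^{\rho-1} - (x_1 - \delta)^{\rho-1}\Bigr]. \]
Evaluating at $\delta=0$ gives $g'(0) = a\rho\bigl(x_2^{\rho-1} - x_1^{\rho-1}\bigr)$. Since $a,\rho > 0$ and the map $t \mapsto t^{\rho-1}$ is strictly increasing on the range containing $x_1, x_2$, the hypothesis $x_1 > x_2$ forces $g'(0) < 0$.

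Continuity of $g'$ at $0$ then yields some $\Delta > 0$ with $g'(\delta) < 0$ throughout $[0,\Delta)$, so $g$ is strictly decreasing on this interval and $g(\delta) < g(0)$ for every $\delta \in (0,\Delta)$, which is exactly the claim.

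The one point that requires care is ensuring that $t \mapsto t^{\rho-1}$ is well-defined and strictly monotone between $x_2$ and $x_1$; in the intended application the two quantities being equalized are nonnegative, so this is automatic, and in any case a signed/$|\cdot|^\rho$ reinterpretation goes through with the same calculation. Thus I expect no real obstacle beyond this bookkeeping, and the entire proof reduces to the one-line derivative computation above.
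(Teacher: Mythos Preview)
Your proof is correct and essentially mirrors the paper's: both differentiate $\Phi$ along the one-parameter family that preserves $ax_1+bx_2$ and use strict monotonicity of $t\mapsto t^{\rho-1}$ to determine the sign. The only difference is cosmetic---the paper parameterizes by $x_1$ under the constraint $ax_1+bx_2=c$ and reads off the explicit bound $\Delta=(x_1-x_2)b/(a+b)$, whereas you parameterize directly by $\delta$ and appeal to continuity of $g'$; either route proves the lemma as stated.
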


\begin{proof}
Assume that  $a x_1 + b x_2  = c$ for some fixed value $c$. We can write $x_2$ as $ (c- a x_1)/b$. Substituting gives  
\[ \Phi(x_1) =  a x_1^{\rho} +  \frac{ (c- a x_1)^{\rho}}{ b^{\rho-1}}. \]
Taking derivative with respect to $x_1$ gives 
\[ \frac{\partial \Phi(x_1)} { \partial x_1} = \rho a x_1^{\rho-1} -  \rho a  \frac{ (c-a x_1)^{\rho-1} }{ b^{\rho-1}} = \rho a (x_1^{\rho -1} - x_2^{\rho-1} ). \]
Since $\rho > 1$ and $a > 0$, $\rho a (x_1^{\rho -1} - x_2^{\rho-1} ) > 0 $ if and only if $x_1 > x_2$. 
It follows that for all  $\delta < \Delta = \frac{(x_1 - x_2)b}{a+b}$ we must have $x_1 - \delta > x_2 + \frac{a \delta}{b}$, and thus $\Phi(x_1 - \delta) > \Phi(x_1)$. The lemma then follows.  
\end{proof}

\begin{theorem}
\label{thm:main}
Given probability mass function $f(\rr)$, starting price function $q(\rr)$ and revenue function $v(\rr)$, $\textsc{WaterlevelPricing}$ returns a steady-profit price function
in time $O(N \log V)$, where $V = \max_{\rr \in S} v(\rr)$ is the maximum value of the revenue function on the domain $S$. Moreover, such a function is unique and with respect to it, customer's profit is always non-negative and the minimum profit is maximized.
\end{theorem}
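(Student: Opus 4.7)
My plan decomposes the proof into four parts: (i) termination, correctness, and runtime of the binary search; (ii) fairness, non-negativity of the price, and non-negativity of the profit, which follow essentially from the construction; (iii) the steady-profit property, which is the main content; and (iv) uniqueness together with the max-min profit guarantee.

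For (i), I would set $g(L) = \sum_{\rr \in S} v_L(\rr) f(\rr)$ and note that each summand $v_L(\rr) = \max(v(\rr)-L,0)$ is continuous and monotonically non-increasing in $L$, with $g(0) = \sum_\rr v(\rr) f(\rr)$ and $g(V) = 0$. Assuming the market is viable in the sense $\sum_\rr q(\rr) f(\rr) \in [0, g(0)]$ (so that the customer's expected profit is non-negative), there is a unique $L^* \in [0,V]$ with $g(L^*) = \sum_\rr q(\rr) f(\rr)$; binary search locates $L^*$ in $O(\log V)$ iterations of $O(N)$ cost each. For (ii), fairness and non-negativity of $p = v_{L^*}$ are immediate, and the customer's profit $v(\rr) - p(\rr) = \min(v(\rr), L^*)$ is manifestly non-negative.

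The core of the proof is (iii), which I would establish by an exchange argument driven by Lemma~\ref{lem:helper}. Let $p$ be any fair non-negative price, $\pi = v - p$ the induced profit, and $\pi^* = \min(v, L^*)$. If $\pi \neq \pi^*$, then since both have the same $f$-weighted mean $\mu$, there exist $\rr_1, \rr_2$ with $\pi(\rr_1) > \pi^*(\rr_1)$ and $\pi(\rr_2) < \pi^*(\rr_2)$. The feasibility constraint $\pi(\rr_1) \leq v(\rr_1)$ combined with the piecewise form of $\pi^*$ forces $\pi(\rr_1) > L^*$ (otherwise $\pi(\rr_1) > \pi^*(\rr_1) = v(\rr_1)$, a contradiction), while $\pi(\rr_2) < \pi^*(\rr_2) \leq L^*$; hence $\pi(\rr_1) > \pi(\rr_2)$ with strict slack on both sides. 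Applying Lemma~\ref{lem:helper} with $a = f(\rr_1)$, $b = f(\rr_2)$, $x_i = \pi(\rr_i) - \mu$, a sufficiently small transfer that decreases $\pi(\rr_1)$ by $\delta$ and increases $\pi(\rr_2)$ by $f(\rr_1)\delta/f(\rr_2)$ preserves the mean (hence fairness), remains feasible for small $\delta$ by the strict inequalities above, and strictly decreases the $\rho$-moment objective. Thus $p$ is not optimal, giving both that $p^* = v_{L^*}$ minimizes the objective and, by the same exchange, that it is the unique minimizer. For the max-min claim in (iv), I would argue $\min_\rr \pi(\rr) \leq \min_\rr \pi^*(\rr)$ case-wise: either $\min \pi^* = v(\rr_0)$ for some $\rr_0$ with $v(\rr_0) \leq L^*$, in which case $\pi(\rr_0) \leq v(\rr_0)$ bounds $\min \pi$ directly; or $\min \pi^* = L^*$ with $v \geq L^*$ everywhere, so $\sum_\rr \pi(\rr) f(\rr) = L^*$ forces $\pi$ to dip to at most $L^*$ somewhere unless $\pi \equiv L^*$.

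The main obstacle is formalizing the exchange argument. One must verify that the local perturbation preserves $p(\rr) \geq 0$ at both $\rr_1$ and $\rr_2$ for all sufficiently small $\delta > 0$; the strict inequalities $\pi(\rr_1) > L^* \geq 0$ and $\pi(\rr_2) < \pi^*(\rr_2) \leq v(\rr_2)$ supply the needed slack. A subtler point is the scope of Lemma~\ref{lem:helper}: its proof uses monotonicity of $x \mapsto x^{\rho-1}$, which is cleanest when $x_1, x_2 \geq 0$ or when $\rho-1$ is an odd integer; reconciling this with the centered variables $x_i = \pi(\rr_i) - \mu$, which may be of either sign, may require restricting to even integer $\rho$ or interpreting the $\rho$-moment as an absolute-value moment in the definition of steady-profit.
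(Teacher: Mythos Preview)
Your proposal is correct and takes essentially the same route as the paper: both establish the steady-profit property via an exchange argument driven by Lemma~\ref{lem:helper}, showing that any fair non-negative price whose profit profile differs from $\min(v,L^*)$ admits a mean-preserving, feasibility-preserving perturbation that strictly lowers every $\rho$-moment. The paper frames the target abstractly as ``profit as equal as possible'' and selects the exchange pair among points with strictly positive price, whereas you compare directly against $\pi^*$ and are more explicit about the max-min claim and the sign hypothesis in Lemma~\ref{lem:helper}; these are differences of presentation rather than substance, and your caveat about the lemma's scope is a legitimate concern that the paper itself does not address.
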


\begin{proof} 
From the definition of $v_L(\rr)$, it is easy to see that $\sum_{\rr \in S} v_L(\rr)f(\rr)$ increases when $L$ decreases. 
Also, $V = \max_{\rr \in S} v(\rr)$ is an upper bound on $L$. It follows that using binary search, we can find $L$ such that 
	$$
		\sum_{\rr \in S} v_L(\rr)f(\rr) = \sum_{\rr \in S} q(\rr)f(\rr).
	$$
in $O(\log V)$ steps, where each step involves computing a summation in $O(N)$ time. 
	
It remains to show that the returned function is a steady-profit price function. 
We will prove that a steady-profit price function $p(\rr)$ is obtained only at a non-negative fair price function where the profit values are as equal as possible. 
By as equal as possible, we mean the profit is equal to a same value everywhere except at points $\rr$ such that $p(\rr) = 0$, where the profit is less than that value. 
It will then follow that such a function is unique and $v_L({\rr})$ is the desired function (with profit $L$ at every $\rr$ such that $v_L({\rr}) > 0$). It will also be clear that with respect to the unique function, the customer's profit is always non-negative and the minimum profit is maximized.

Assume that $p(\rr)$ is a non-negative fair price function such that with respect to $p(\rr)$, the profit is not as equal as possible. We show that $p(\rr)$ can be modified such that the $\rho$-moment 
\[ \sum_{\rr \in S} \left( v(\rr) - p(\rr) - \mu \right)^\rho f(\rr).\]
decreases for all $\rho > 1$. 

Let $h(\rr) =  v(\rr) - p(\rr) - \mu $ be the deviation of the customer's profit from the mean value. 
Since $p(\rr)$ does not make the profit as equal as possible, $\exists \rr_1, \rr_2$ such that $h(\rr_1) \not = h(\rr_2)$ and $p(\rr_1) $, $p(\rr_2)$ are both positive.  
Without loss of generality, we may assume that $h(\rr_1) > h(\rr_2)$. 
By Lemma~\ref{lem:helper}, there exists a $\Delta$ such that for all $\delta<\Delta$, decreasing $h(\rr_1)$ by $\delta$ and increasing  $h(\rr_2)$ by $\delta f(\rr_1) / f(\rr_2)$ will result in a decrease the quantity 
$
h(\rr_1)^{\rho} f(\rr_1) +  h(\rr_2)^{\rho} f(\rr_2)
$
for all $\rho >1$. 

Let $\delta = \min (\Delta, p(\rr_2) f(\rr_2) / f(\rr_1))$, and consider the following modification on $p(\rr)$:
\begin{enumerate}
\item $p(\rr_1) \leftarrow p(\rr_1) + \delta$,
\item $p(\rr_2) \leftarrow p(\rr_2) - \delta f(\rr_1) / f(\rr_2)$, 
\item $p(\rr) \leftarrow p(\rr)$ for all $\rr \not = \rr_1, \rr_2$.
\end{enumerate} 
It is easy to see that with the modification, $p(\rr)$ is still fair and non-negative. Moreover, $h(\rr_1)$ decreases by $\delta$ and $h(\rr_2)$ increases by $\delta f(\rr_1) / f(\rr_2)$. It follows that for all $\rho > 1$, the $\rho$-moment $\sum_{\rr \in S} h(\rr)^{\rho} f(\rr)$ decreases as desired.  
\end{proof}

Since $\textsc{WaterlevelPricing}$ computes a fair, risk-free and steady-profit price function, the reasonable customer will report his revenue function truthfully. We obtain the following corollary immediately.

\begin{corollary}
	The pricing scheme $\textsc{WaterlevelPricing}$ is incentive compatible.
\end{corollary}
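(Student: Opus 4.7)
The plan is to reduce the corollary directly to Theorem 4.2 by observing that the customer's expected profit does not depend on what he reports, and then arguing that variance (and higher moments) of his true profit is uniquely minimized by truthful reporting.

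First I would fix the true revenue function $v(\rr)$ and consider an arbitrary reported revenue function $\tilde v(\rr)$, not necessarily equal to $v$. Let $\tilde p(\rr)$ be the output of $\textsc{WaterlevelPricing}$ on inputs $f, q, \tilde v$. By construction of the algorithm, $\tilde p$ is a non-negative fair price function in the sense of Section 3, where fairness is measured against the starting price $q$, not against the reported revenue. This is the crucial observation: the fairness condition is
\[ \sum_{\rr \in S} \tilde p(\rr) f(\rr) = \sum_{\rr \in S} q(\rr) f(\rr), \]
so the expected amount the customer pays is $\sum q(\rr) f(\rr)$ regardless of what he reports. Consequently his true expected profit
\[ \mu = \sum_{\rr \in S} \bigl( v(\rr) - \tilde p(\rr) \bigr) f(\rr) = \sum_{\rr \in S} \bigl( v(\rr) - q(\rr) \bigr) f(\rr) \]
is independent of the report $\tilde v$.

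Next I would observe that a rational customer who wishes to minimize risk wants to minimize the $\rho$-th moment
\[ \sum_{\rr \in S} \bigl( v(\rr) - \tilde p(\rr) - \mu \bigr)^{\rho} f(\rr) \]
of his true profit deviation, for $\rho > 1$, subject to the fact that $\tilde p$ ranges over whatever price functions the algorithm can output. Since every such $\tilde p$ is a non-negative fair price function, and since Theorem 4.2 states that over \emph{all} non-negative fair price functions the $\rho$-moment is uniquely minimized by the function $v_L$ that $\textsc{WaterlevelPricing}$ produces from the true $v$, it follows that no report $\tilde v \neq v$ can do better, and truthful reporting is uniquely optimal.

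The only step that requires any care is verifying that this claim of Theorem 4.2 is indeed a minimum over all non-negative fair price functions (not just over those produced by the algorithm from some input), and that the mean $\mu$ used in the $\rho$-moment is the same in both cases; both facts are immediate from the definitions in Section 3 and from the fairness identity above. There is no genuine obstacle here: once one notes that fairness is anchored to $q$ rather than to the reported revenue, the corollary drops out of Theorem 4.2 with essentially no further work.
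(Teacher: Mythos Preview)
Your proposal is correct and follows the same route as the paper, which derives the corollary directly from Theorem~4.2 with essentially no additional argument. Your write-up is in fact more explicit than the paper's: you spell out why the fairness constraint, being anchored to $q$ rather than to the reported revenue, makes the expected profit report-independent, and why the algorithm's output on any report lies in the feasible set over which Theorem~4.2 already identifies the unique optimum; the paper leaves these points implicit.
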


\section{A least squares algorithm}
\label{sec.linear}

In the previous section, we presented an alternative pricing scheme that is favorable for both agents in the model. 
We also showed that the scheme has some desirable properties such as the customer's profit is always non-negative and its deviation from the mean value is minimized. 
Despite that fact, the pricing function can be quite unnatural. 
For instance, it can happen that the customer has to pay more money for less resources. 
In this section, we prevent such unwanted outcomes from happening by adding a reasonable assumption on the price function. 
To be precise, we insist that the price function must be a linear function of the resources, that is, 
it must be of the form $p(\rr) = \sum_{i=1}^m a_i r_i + a_0$ for non-negative $a_i$s.

\begin{remark}
	It is a common practice to write a linear function $p(\rr) = \sum_{i=1}^m a_i r_i + a_0$ as $p(\rr) = \aa^T\rr$ where $r_0 = 1$ for all $\rr$. The above convenience trick allows us to ignore the constant term in the linear function. Throughout this section, we will adopt this representation and assume that $\rr$ is an $(m+1)$-dimensional vector with $r_0 = 1$.
\end{remark}

Not surprisingly, with the new restriction, the target function cannot satisfy all properties of the function introduced in the previous section. 
Specifically, we cannot have the property that the customer's profit is always non-negative. 
Instead, our goal is to find a linear price function with non-negative coefficients such that the variance of the profit is minimized. To be precise, we are interested in price function with the following properties:
\begin{enumerate}
	\item \emph{Fairness:} The target price function is a fair price function.
	\item \emph{Linearity:} The target price function is a linear function.
	\item \emph{Non-negativity:} The target price function is non-negative in the whole domain.
	\item \emph{Stability:} The target price function minimizes the profit variance subject to the above 3 conditions.
\end{enumerate}

We give an algorithm for computing a desired price function in Figure~\ref{alg:linear}. Our algorithm uses an oracle that solves non-negative least squares, a constrained version of the normal least squares problem where the coefficients of the linear function are not allowed to be negative. For the details of non-negative least squares solvers, 
please see \cite{Chen09nonnegativityconstraints,Bro97afast,Boutsidis_randomprojections}. The definition of $\textsc{NonNegativeLeastSquares}$ oracle is given below.

\begin{definition}
$\textsc{NonNegativeLeastSquares}\left(\XX ,\yy \right)$ is an oracle that, on input $\XX \in R^{n \times m}$ and $\yy \in R^n$, returns a non-negative vector $\aa \in R^m$ such that
\[\yy = \XX \aa + \boldsymbol{\epsilon}, \]
and $\norm{\boldsymbol{\epsilon}}^2_2$ is minimize.
\end{definition}

\begin{figure}[ht]
	
	\begin{algbox}
		$p(\rr) = \textsc{LinearPricing}\left( f(\rr),q(\rr),v(\rr)\right)$
		
		\textbf{Input:} Probability mass function $f(\rr)$, starting price function $q(\rr)$ and revenue function $v(\rr)$. 
		
		\textbf{Output:} Linear price function $p(\rr)$ with non-negative coefficients that minimizes the profit variance.
		\begin{enumerate}
			\item Compute 
			$ 
			\mu =  \sum^N_{k=1} \left(v\left(\rr^{(k)} \right) - q \left(\rr^{(k)} \right)\right)f \left(\rr^{(k)}\right).
			$
			\item For $1 \leq k \leq N$, let 
			\begin{align*}
			y^{(k)} &= \left( v\big(\rr^{(k)} \big) - \mu \right) \sqrt{f\big(\rr^{(k)} \big)}, \\
			\xx^{(k)} &= \rr^{(k)} \sqrt{f\big(\rr^{(k)}\big)}.
			\end{align*}
			\item Let $M$ be a big number, and 
			\begin{align*}
				y^{(N+1)} &= M \sum_{k = 1}^N \left( v\big(\rr^{(k)} \big) - \mu \right) {f\big(\rr^{(k)} \big)}, \\
				\xx^{(N+1)} &= M \sum_{k = 1}^N \rr^{(k)} {f\big(\rr^{(k)}\big)}.
			\end{align*}		
			\item Let $\aa \leftarrow \textsc{NonNegativeLeastSquares}\left(\XX ,\yy \right)$ where 
			$$
			\XX= \begin{bmatrix}
			{\xx^{(1)}}^T\\
			\vdots \\
			{\xx^{(N+1)}}^T
			\end{bmatrix}
			\text{ and }
			\yy= \begin{bmatrix}
			y^{(1)} \\           
			\vdots \\
			y^{(N+1)}
			\end{bmatrix} .
			$$
			\item Return $p(\rr) = \aa^T\rr$.
		\end{enumerate}
	\end{algbox}
	
	\caption{Algorithm for Computing a Linear Price Function.}
	
	\label{alg:linear}
	
\end{figure}


We give the main theorem of the section and its proof.

\begin{theorem} 
\label{thm:linear}
Given probability mass function $f(\rr)$, starting price function $q(\rr)$ and revenue function $v(\rr)$, $\textsc{LinearPricing}$ returns a fair price function $p(\rr) = \aa^T \rr$ such that $\aa$ is non-negative and $\sum_{\rr \in S} (v(\rr) -  p(\rr) - \mu)^2 f(\rr) $ is minimized among all such functions.
\end{theorem}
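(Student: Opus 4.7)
The plan is to recognize $\textsc{LinearPricing}$ as a penalty-method reduction of the constrained problem to non-negative least squares, with the first $N$ rows of $\XX,\yy$ encoding the variance we wish to minimize and the $(N{+}1)$-th row encoding the fairness constraint as a heavily weighted quadratic penalty. Since \textsc{NonNegativeLeastSquares} automatically enforces $\aa \ge 0$, the remaining work is to verify these two algebraic identifications and then to argue, via the large-$M$ limit, that the NNLS optimum is forced to be both fair and variance-optimal.

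First I would check that for each $k \le N$, because a common factor of $\sqrt{f(\rr^{(k)})}$ pulls out of the square,
\[
\bigl(y^{(k)} - \aa^T\xx^{(k)}\bigr)^2 \;=\; \bigl(v(\rr^{(k)}) - \aa^T \rr^{(k)} - \mu\bigr)^2 f(\rr^{(k)}).
\]
Summing $k=1,\ldots,N$ and writing $p(\rr)=\aa^T\rr$ gives exactly $\sum_{\rr \in S}(v(\rr) - p(\rr) - \mu)^2 f(\rr)$, the variance to be minimized.

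Next I would unpack the $(N{+}1)$-th row. Using $\mu = \sum_k (v(\rr^{(k)}) - q(\rr^{(k)}))f(\rr^{(k)})$ and $\sum_k f(\rr^{(k)}) = 1$, one obtains $\sum_k (v(\rr^{(k)})-\mu)f(\rr^{(k)}) = \sum_k q(\rr^{(k)}) f(\rr^{(k)})$, whence
\[
y^{(N+1)} - \aa^T \xx^{(N+1)} \;=\; M\Bigl(\,\textstyle\sum_{\rr} q(\rr) f(\rr) - \sum_{\rr} p(\rr) f(\rr)\Bigr) \;=\; M\,D(\aa),
\]
where $D(\aa)$ is the fairness deficit. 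Consequently the NNLS objective is $\Phi_M(\aa) = V(\aa) + M^2 D(\aa)^2$, with $V(\aa)$ denoting the variance.

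Finally I would run a standard penalty-method argument. A fair non-negative linear price always exists: since $r_0 \equiv 1$, set $a_0 = \sum_{\rr} q(\rr) f(\rr)$ and $a_i = 0$ for $i \ge 1$. Let $\aa^*$ denote an optimum among fair non-negative prices and put $V^* = V(\aa^*)$. For the NNLS output $\aa^{(M)}$,
\[
V(\aa^{(M)}) + M^2 D(\aa^{(M)})^2 \;\le\; \Phi_M(\aa^*) \;=\; V^*,
\]
so $D(\aa^{(M)})^2 \le V^*/M^2$ and $V(\aa^{(M)}) \le V^*$. The first bound forces fairness in the limit, the second forces optimality, and $\aa^{(M)} \ge 0$ is automatic from the NNLS oracle; together these yield the claim. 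The main obstacle I anticipate is making the ``big number'' step fully rigorous for a black-box NNLS solver: one either quantifies how large $M$ must be in terms of the data scale and $V^*$, or extracts a subsequential limit point using that $V^*$-sublevel sets of $V$, intersected with the non-negative orthant and the affine span of the $\rr^{(k)}$'s, are compact.
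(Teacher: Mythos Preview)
Your proposal is correct and follows essentially the same route as the paper: identify the first $N$ rows of the NNLS instance with the weighted profit variance, identify the $(N{+}1)$-th row with an $M$-scaled fairness deficit, and then let the large $M$ force fairness while NNLS handles non-negativity. If anything, your penalty-method inequality $V(\aa^{(M)}) + M^2 D(\aa^{(M)})^2 \le V^*$ and your explicit fair feasible point $a_0 = \sum_{\rr} q(\rr)f(\rr)$ make the ``big $M$'' step more precise than the paper's own treatment, which simply asserts that for sufficiently large $M$ the optimal solution must have $\overline{y} - \aa^T\overline{\xx}$ go to $0$.
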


\begin{proof} Recall that $\yy = \XX \aa + \boldsymbol{\epsilon}$. Rearranging gives 
\[
	\norm{\boldsymbol{\epsilon}}^2_2 = \sum_{k=1}^{N+1} \left( y^{(k)} - \aa^T \xx^{(k)}\right)^2.
\]
Let
			\begin{align*}
				\overline{y} &=  \sum_{k = 1}^N \left( v\big(\rr^{(k)} \big) - \mu \right) {f\big(\rr^{(k)} \big)}, \\
				\overline{\xx} &=  \sum_{k = 1}^N \rr^{(k)} {f\big(\rr^{(k)}\big)}.
			\end{align*} 
We may assume $M$ is sufficiently large to guarantee that for an optimal solution $\aa$ returned by $\textsc{NonNegativeLeastSquares}\left(\XX ,\yy \right)$, $\overline{y} - \aa^T \overline{\xx}$ must go to 0.

This condition ensures that $p(\rr)$ is a fair price function, that is, the expected price is equal to the expected starting price. 
We have 
\begin{align*}
\overline{y} - \aa^T \overline{\xx} &=  \sum_{k = 1}^N \left( v\big(\rr^{(k)} \big) - \mu \right) {f\big(\rr^{(k)} \big)} - 
\aa^T \sum_{k = 1}^N \rr^{(k)} {f\big(\rr^{(k)}\big)} \\ 
&= \sum_{k = 1}^N \left( v\big(\rr^{(k)} \big) - \mu  - 
\aa^T \rr^{(k)} \right){f\big(\rr^{(k)}\big)} \\
&= 0.
\end{align*}
It follows that 
\[ \mu =  \mu  \sum_{k = 1}^N {f\big(\rr^{(k)}\big)} =  \sum_{k = 1}^N \left( v\big(\rr^{(k)} \big) - 
p(\rr^{(k)}) \right){f\big(\rr^{(k)}\big)}. \]
By construction,
\[ \mu = \sum_{k = 1}^N \left( v\big(\rr^{(k)} \big) - 
q(\rr^{(k)}) \right){f\big(\rr^{(k)}\big)}. \]
Therefore,
\[ \sum_{k = 1}^N 
p(\rr^{(k)}) {f\big(\rr^{(k)}\big)}   = \sum_{k = 1}^N 
q(\rr^{(k)}) {f\big(\rr^{(k)}\big)} \]
as desired.

Moreover, since $M$ is sufficiently large, minimizing $\norm{\boldsymbol{\epsilon}}^2_2$ is equivalent to minimizing 
\[ \sum_{k=1}^N \left( y^{(k)} - \aa^T \xx^{(k)}\right)^2\]
subject to $\overline{y} - \aa^T \overline{\xx} = 0$, i.e., subject to $p(\rr)$ being a fair price function.
We have
	\begin{align*}
			 \sum_{k=1}^N \left( y^{(k)} - \aa^T \xx^{(k)}\right)^2 
			&= \sum_{k=1}^N \left( \left( v \big(\rr^{(k)} \big) - \mu \right) \sqrt{f \big(\rr^{(k)} \big)} - \aa^T \rr^{(k)} \sqrt{f\big(\rr^{(k)} \big)}\right)^2 \\
			&= \sum_{k=1}^N \left(  v \big(\rr^{(k)} \big) - \mu   - \aa^T \rr^{(k)} \right)^2f \big(\rr^{(k)} \big) \\
			&= \sum_{k=1}^N \left(  v \big(\rr^{(k)} \big)  - p \big( \rr^{(k)} \big) - \mu \right)^2f \big(\rr^{(k)} \big).
	\end{align*}
Therefore, the minimized quantity is precisely the variance of profit. 

Since $\textsc{NonNegativeLeastSquares}\left(\XX ,\yy \right)$ returns a non-negative vector $\aa$, the linear function $p(\rr) = \aa^T \rr$ has non-negative coefficients as claimed. 

\end{proof}

The following corollary follows immediately from the fact that $\textsc{LinearPricing}$ returns a fair, linear and non-negative price 
such that the profit variance is minimized.

\begin{corollary}
	Assume that the starting price function $q(\rr)$ is a non-negative and linear, the pricing scheme $\textsc{LinearPricing}$ is incentive compatible.
\end{corollary}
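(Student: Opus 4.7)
The plan is to argue that, under the model's standing assumption that the customer is rational and seeks to minimize the risk of his profit, truthful reporting of $v(\rr)$ is his best strategy against $\textsc{LinearPricing}$. I would reduce incentive compatibility to two claims: (a) the customer's true expected profit does not depend on which revenue function he reports, and (b) the variance of his true profit is minimized when he reports truthfully. Together these imply that any misreport weakly worsens the customer's utility under the risk-averse objective, so truth-telling is weakly dominant.

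Claim (a) follows directly from the fairness guarantee of Theorem~\ref{thm:linear}. For any reported $\tilde v$, the returned price $\tilde p$ satisfies $\sum_{\rr \in S} \tilde p(\rr) f(\rr) = \sum_{\rr \in S} q(\rr) f(\rr)$, so the customer's true expected profit equals $\sum_{\rr \in S} v(\rr) f(\rr) - \sum_{\rr \in S} q(\rr) f(\rr)$ irrespective of $\tilde v$. This is also where the hypothesis that $q$ is itself non-negative and linear is used: it guarantees that the feasible set of fair, linear, non-negative price functions --- call it $\mathcal{F}$ --- is non-empty (it contains $q$), so the non-negative least squares step always has a valid solution.

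For claim (b), the key observation is that $\mathcal{F}$ depends only on $f$ and $q$, not on the reported revenue. By Theorem~\ref{thm:linear}, when the customer reports $\tilde v$ the returned $\tilde p \in \mathcal{F}$ is the minimizer over $\mathcal{F}$ of $\sum_{\rr \in S} (\tilde v(\rr) - \tilde p(\rr) - \tilde \mu)^2 f(\rr)$, i.e., of the variance of the \emph{reported} profit; this agrees with the variance of the \emph{true} profit exactly when $\tilde v = v$, because both $\tilde p$ and the truthful output $p^* := \textsc{LinearPricing}(f,q,v)$ are fair and hence yield the same mean profit $\mu = \sum_{\rr \in S} v(\rr)f(\rr) - \sum_{\rr \in S} q(\rr)f(\rr)$. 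Therefore, for any misreport $\tilde v$,
\[
\sum_{\rr \in S} \bigl( v(\rr) - \tilde p(\rr) - \mu \bigr)^2 f(\rr) \;\geq\; \sum_{\rr \in S} \bigl( v(\rr) - p^*(\rr) - \mu \bigr)^2 f(\rr),
\]
since $\tilde p, p^* \in \mathcal{F}$ and $p^*$ is by definition the minimizer of true-profit variance over $\mathcal{F}$. The main conceptual obstacle, which I think is quite mild, is precisely this separation between the feasible set (which is a function of $q$ alone, so misreporting cannot enlarge the customer's menu of achievable prices) and the objective (which is pegged to the \emph{true} $v$ the customer actually faces); once this is observed, misreporting only shifts the optimization away from the one the customer really cares about, and the corollary follows.
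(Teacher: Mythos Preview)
Your argument is correct and follows the same route the paper takes: the corollary is stated there as an immediate consequence of Theorem~\ref{thm:linear}, and you have simply unpacked that implication carefully by separating the report-independent feasible set $\mathcal{F}$ from the report-dependent objective. Your explicit identification of where the hypothesis on $q$ enters (non-emptiness of $\mathcal{F}$) is a point the paper leaves implicit.
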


	\section{Discussion}

\label{sec.discussion}

The reason to seek a pricing function that is linear in the resources rented is to remove the distortion that a customer has to pay more 
money for less resources. We note that linearity is not essential for ensuring this, in fact any monotone function will also suffice.
This motivates the following question: find a price function that is monotome in the resources rented so that it is non-negative and 
minimizes the variance of the profit among all such functions. We believe that a variant of our water-filling algorithm should solve this problem.

One feature of the linear pricing function is that it is robust to mistakes in the reported revenue function in the sense that price function will not
change much as a result of altering a few points. Furthermore, this method does not need the revenue functions at all values of resources rented,
it works even if some of these points are missing.

	\clearpage
	
	\bibliographystyle{plain}
	\bibliography{cloud}

\end{document}